\def\BibTeX{{\rm B\kern-.05em{\sc i\kern-.025em b}\kern-.08em
    T\kern-.1667em\lower.7ex\hbox{E}\kern-.125emX}}
\DeclareSymbolFontAlphabet{\amsmathbb}{AMSb}
\DeclareSymbolFontAlphabet{\amsmathbb}{AMSb}
\pgfplotsset{compat=newest}
\definecolor{red}{rgb}{0.745,0.192,0.102}
\definecolor{darkgreen}{RGB}{34,161,55}
\definecolor{ruhuisstijlrood}{rgb}{0.745,0.192,0.102}
\definecolor{ruhuisstijlzwart}{rgb}{0,0,0}
\definecolor{ruhuisstijlwit}{rgb}{0.98,0.98,0.98}
\newcommand{\rured}[1]{\textcolor{ruhuisstijlrood}{#1}}
\definecolor{plotblue}{rgb}{0.1,0.498039215686275,0.9549019607843137}
\newcolumntype{H}{>{\setbox0=\hbox\bgroup}c<{\egroup}@{}}
\definecolor{LightCyan}{rgb}{0.88,1,1}
\newcommand*{\belowrulesepcolor}[1]{%
  \noalign{%
    \kern-\belowrulesep 
    \begingroup 
      \color{#1}%
      \hrule height\belowrulesep 
    \endgroup 
  }%
} 
\newcommand*{\aboverulesepcolor}[1]{%
  \noalign{%
    \begingroup 
      \color{#1}%
      \hrule height\aboverulesep 
    \endgroup 
    \kern-\aboverulesep 
  }%
} 
\newcommand{\prism}{\textrm{PRISM}\xspace}
\newtheorem{definition}{Definition}
\newtheorem{assumption}{Assumption}
\newtheorem{theorem}{Theorem}
\newtheorem{remark}{Remark}
\newtheorem{lemma}{Lemma}
\newtheorem{problem}{Problem}
\crefname{equation}{Eq.}{Eqs.}
\crefname{pluralequation}{Eqs.}{Eqs.}
\crefname{algorithm}{Algorithm}{Algorithm}
\crefname{figure}{Fig.}{Figs.}
\crefname{pluralfigure}{Figs.}{Figs.}
\crefname{section}{Sect.}{Sects.}
\crefname{pluralsection}{Sects.}{Sects.}
\crefname{table}{Table}{Table}
\crefname{pluraltable}{Tables}{Tables}
\crefname{definition}{Def.}{Def.}
\crefname{pluraldefinition}{Defs.}{Defs.}
\crefname{theorem}{Theorem}{Theorems}
\crefname{pluraltheorem}{Theorems}{Theorems}
\crefname{lemma}{Lemma}{Lemmas}
\crefname{plurallemma}{Lemmas}{Lemmas}
\crefname{example}{Example}{Example}
\crefname{pluralexample}{Examples}{Examples}
\crefname{problem}{Problem}{Problem}
\crefname{pluralproblem}{Problems}{Problems}
\crefname{assumption}{Assumption}{Assumption}
\crefname{pluralassumption}{Assumptions}{Assumptions}
\crefname{remark}{Remark}{Remark}
\crefname{pluralremark}{Remarks}{Remarks}
\crefname{proposition}{Proposition}{Proposition}
\crefname{pluralproposition}{Propositions}{Propositions}
\crefname{corollary}{Corollary}{Corollary}
\crefname{pluralcorollary}{Corollaries}{Corollaries}
\crefname{appendix}{Appendix}{Appendices}
\crefname{pluralappendix}{Appendices}{Appendices}
\newcommand{\Prob}{\amsmathbb{P}}
\newcommand{\R}{\amsmathbb{R}}
\newcommand{\N}{\amsmathbb{N}}
\newcommand{\distr}[1]{\ensuremath{\mathcal{P}(#1)}}
\newcommand{\distrd}[1]{\ensuremath{\mathcal{P}(#1)}}
\newcommand{\interior}[1]{\ensuremath{\mathrm{int(#1)}}}
\newcommand{\conv}{\ensuremath{\mathrm{conv}}}
\newcommand{\Prr}{\mathrm{Pr}}
\newcommand{\tuple}[1]{\ensuremath{( #1 )}}
\newcommand{\BackReach}{\ensuremath{\mathcal{R}^{-1}}}
\newcommand{\BackReachSt}{\ensuremath{\BackReach_{\mathrm{cl}}}}
\newcommand{\target}{\ensuremath{{d}}}
\newcommand{\partitionSpace}{\ensuremath{\mathcal{X}}}
\newcommand{\goalRegion}{\ensuremath{X_G}}
\newcommand{\unsafeRegion}{\ensuremath{X_U}}
\newcommand{\finiteHorizon}{\ensuremath{H}}
\newcommand{\region}{\ensuremath{\mathcal{V}}}
\newcommand{\inv}{\ensuremath{{-1}}}
\newcommand{\system}{\mathcal{S}}
\newcommand{\state}{{x}}
\newcommand{\control}{{u}}
\newcommand{\controlSpace}{U}
\newcommand{\controller}{\ensuremath{c}} 
\newcommand{\noise}{{\eta}}
\newcommand{\States}{S}
\newcommand{\Actions}{\ensuremath{\mathcal{A}}}
\newcommand{\transfunc}{P}
\newcommand{\transfuncImdp}{\ensuremath{\mathcal{P}}}
\newcommand{\Plb}{\ensuremath{\check{P}}}
\newcommand{\Pub}{\ensuremath{\hat{P}}}
\newcommand{\imdp}{\ensuremath{\mathcal{I}}}
\newcommand{\policy}{\ensuremath{\pi}}
\newcommand{\policySpace}{\ensuremath{\Pi}}
\newcommand{\Gauss}[2]{\ensuremath{\mathcal{N}(#1 , #2)}}
\newcommand{\mean}{\ensuremath{\mu}}
\newcommand{\cov}{\ensuremath{\Sigma}}
\newcommand{\pproperty}{\ensuremath{\varphi}}
\DeclareMathOperator*{\argmax}{arg\,max}
\definecolor{labelA}{rgb}{0.992156863,0.682352941,0.380392157}
\definecolor{empty_color}{rgb}{1,1,0.749019608}
\definecolor{nonempty_color}{rgb}{0.670588235,0.866666667,0.643137255}
\tikzset{empty label/.style={draw, fill=empty_color}}
\tikzset{nonempty label/.style={draw, fill=nonempty_color}}
\tikzset{no label/.style={draw}}
\tikzset{small/.style={draw, inner sep=1pt, minimum size=1pt, font=\scriptsize}}
\tikzset{medium/.style={draw, inner sep=1pt, minimum size=15pt, font=\scriptsize}}
\newcommand{\change}[1]{#1} 
\begin{document}

\title{
\change{
A Stability-Based Abstraction Framework for Reach-Avoid Control \\ of Stochastic Dynamical Systems with Unknown Noise Distributions
}
\thanks{
This research has been partially funded by an NWO grant NWA.1160.18.238 (PrimaVera), ERC Starting Grant 101077178 (DEUCE), and a JPMC faculty research award.
T. Badings and N. Jansen are with the Institute for Computing and Information Sciences, Radboud University, Nijmegen, The Netherlands; \{{\tt\small thom.badings, nils.jansen}\}{\tt\small @ru.nl}.
N. Jansen is also with the Faculty of Computer Science, Ruhr-Universität Bochum, Germany.
L. Romao and A. Abate are with the Department of Computer Science, Oxford University; \{{\tt\small licio.romao, alessandro.abate}\}{\tt\small @cs.ox.ac.uk}.
} 
}

\author{Thom Badings, Licio Romao, Alessandro Abate, Nils Jansen}

\maketitle

\begin{abstract} 
Finite-state abstractions are widely studied for the automated synthesis of correct-by-construction controllers for stochastic dynamical systems. However, existing abstraction methods often lead to prohibitively large finite-state models. To address this issue, we propose a novel abstraction scheme for stochastic linear systems that exploits the system's stability to obtain significantly smaller abstract models. As a unique feature, we first stabilize the open-loop dynamics using a linear feedback gain. We then use a model-based approach to abstract a known part of the stabilized dynamics while using a data-driven method to account for the stochastic uncertainty. We formalize abstractions as Markov decision processes (MDPs) with intervals of transition probabilities. By stabilizing the dynamics, we can further constrain the control input modeled in the abstraction, which leads to smaller abstract models while retaining the correctness of controllers. Moreover, when the stabilizing feedback controller is aligned with the property of interest, then a good trade-off is achieved between the reduction in the abstraction size and the performance loss. The experiments show that our approach can reduce the size of the graph of abstractions by up to 90\% with negligible performance loss. 
\end{abstract}

\section{Introduction}
\label{sec:introduction}

The automated synthesis of correct-by-construction controllers for stochastic dynamical systems is crucial for their deployment in safety-critical scenarios. 
Synthesizing such controllers is challenging due to continuous and stochastic dynamics and the complexity of control tasks~\cite{DBLP:journals/automatica/LavaeiSAZ22}.
One solution is to abstract the system into a finite-state (also called symbolic) model~\cite{DBLP:journals/automatica/AbatePLS08,DBLP:journals/tac/LahijanianAB15,DBLP:books/daglib/0032856}.
Under an appropriate behavioral relation (e.g., a feedback refinement relation~\cite{DBLP:journals/tac/ReissigWR17}), trajectories of the abstraction are related to those of the dynamical system.
Thus, a controller (i.e., policy) in the abstraction can, by construction, be refined to a controller for the original system.

Conventional abstraction methods, however, rely on a precise mathematical model of the system.
In relaxing this assumption, \emph{data-driven abstractions} have recently gained momentum~\cite{DBLP:conf/adhs/MakdesiGF21,DBLP:journals/csysl/CoppolaPM23,DBLP:journals/csysl/LavaeiF23,DBLP:journals/corr/abs-2206-08069,DBLP:journals/automatica/HashimotoSKUD22,DBLP:conf/hybrid/SadraddiniB18, DBLP:conf/cdc/DevonportSA21,DBLP:journals/corr/abs-2303-17618}. These methods take a black-box (or sometimes a gray-box, e.g.,~\cite{DBLP:journals/automatica/HashimotoSKUD22}) perspective and construct abstractions from sampled system trajectories.
Several papers provide \emph{probably approximately correct} (PAC) guarantees~\cite{DBLP:journals/csysl/CoppolaPM23,DBLP:conf/cdc/DevonportSA21}, whereas others return controllers with hard (non-statistical) guarantees~\cite{DBLP:journals/csysl/LavaeiF23,DBLP:conf/hybrid/SadraddiniB18,DBLP:conf/adhs/MakdesiGF21}.
However, except from~\cite{DBLP:journals/corr/abs-2303-17618}, none of these methods can handle stochastic systems.

In this paper, we take a middle route between these model-based and data-driven abstraction methods.
Specifically, we consider control problems for linear systems with \emph{known} deterministic dynamics but stochastic noise of an \emph{unknown distribution}.
This \emph{hybrid} setting is similar to~\cite{DBLP:conf/aaai/BadingsA00PS22,DBLP:journals/jair/BadingsRAPPSJ23}, which develop a method to construct abstractions with PAC guarantees by sampling of the noise.
However, due to their exhaustive discretization of the state space, the application to large-scale, industrial and realistic systems remains elusive. 

A promising way to improve scalability is to exploit classical system properties, such as stability~\cite{franklin2019feedback}.
For example,~\cite{DBLP:conf/hybrid/Girard07} shows that for any stable discrete-time linear system with input constraints, there exists an approximately bismilar finite abstraction of any desired precision.
Similar results hold for incrementally stable continuous-time switched~\cite{DBLP:journals/tac/GirardPT10} and nonlinear systems~\cite{DBLP:journals/tac/Tabuada08,DBLP:journals/automatica/PolaGT08}.
A related notion is that of incremental forward completeness, which enables the abstraction of nonlinear discrete-time systems~\cite{DBLP:journals/tac/ZamaniPMT12}.
We observe that these results guarantee \emph{the existence} of a certain type of abstraction \emph{if} the system is stable.
However, we postulate that stability may also be beneficial to construct finite-state abstractions with \emph{smaller} underlying graphs.

\tikzstyle{block} = [draw, rectangle, minimum height=2em, minimum width=8em, text centered]
\tikzstyle{sum} = [draw, circle, node distance=1.5cm]
\tikzstyle{input} = [coordinate]
\tikzstyle{output} = [coordinate]
\tikzstyle{pinstyle} = [pin edge={to-,thin,black}]

\newcommand\xshift{0.75em}

\begin{figure}[t] 

\begin{subfigure}[b]{0.47\linewidth}
    \centering
    
    \begin{tikzpicture}[
        auto,>=latex',font=\scriptsize,
        colA/.style={fill=Pastel2-A},
        colB/.style={fill=Pastel2-B},
        colC/.style={fill=Pastel2-C},
        ]
    \node (system) [block, colA, align=center] {\textbf{Dynamical system} \\ $x^+ \coloneqq Ax+Bu+\noise$};
    \node (controller) [block, colB, align=center, below=0.5em of system] {\textbf{Symbolic controller} \\ $u \coloneqq \controller(x, k)$};
    \node (abstraction) [block, colC, align=center, below=1.2em of controller, minimum height=1.1em, minimum width=5em] {\textbf{Abstraction}};

    \node [input, name=inputA, left=4em of system, yshift=0.8em] {};
    \node [input, name=inputB, left=1.7em of system] {};
    \node [input, name=inputC, left=1em of abstraction] {};
    \node [input, name=inputAbove, above=1.2em of system, xshift=-3*\xshift] {};
    \node [output, name=out, right=1em of system] {};

    \draw [->] (inputAbove) -- node [pos=0.2, right, align=center] {Stochastic noise $\noise$} (system.north-|inputAbove);
    \draw [->] (inputB) -- node [pos=0.3, above] {$u \in \controlSpace$} (system.west);
    \draw [->] (inputC) -- node [left, align=right, xshift=-0.4em] {Noise \\ samples} (abstraction.west);

    \draw [-] (system) -- node [pos=0.7, above] {$x$} (out);
    \draw [->] (out) |- node [] {} (controller.east);
    \draw [-] (controller.west) -| node [] {} (inputB);

    \draw [->] ([xshift=-\xshift] controller.south) -- node [left, midway] {$x^+, k$} ([xshift=-\xshift] abstraction.north);
    \draw [->] ([xshift=\xshift] abstraction.north) -- node [right, midway] {Abstract action} ([xshift=\xshift] controller.south);
    
    \end{tikzpicture}
    
    \caption{Single-layer abstraction.}
    \label{fig:1layer}
\end{subfigure}
\begin{subfigure}[b]{0.49\linewidth}
    \centering

    \begin{tikzpicture}[
        auto,>=latex',font=\scriptsize,
        colA/.style={fill=Pastel2-A},
        colB/.style={fill=Pastel2-B},
        colC/.style={fill=Pastel2-C},
        colD/.style={fill=Pastel2-F},
    ]
    \node (system) [block, colA, align=center] {\textbf{Dynamical system} \\ $x^+ = Ax+Bu+\noise$};
    \node (stabilizer) [block, colD, align=center, below=0.5em of system] {\textbf{Stabilizing controller} \\ $-Kx$};
    \node (controller) [block, colB, align=center, below=0.5em of stabilizer] {\textbf{Symbolic controller} \\ $u' \coloneqq \controller(x, k)$};
    \node (abstraction) [block, colC, align=center, below=1.2em of controller, minimum height=1.1em, minimum width=5em] {\textbf{Abstraction}};

    \node [input, name=inputA, left=4em of system, yshift=0.8em] {};
    \node [input, name=inputB, left=1.7em of system] {};
    \node [input, name=inputC, left=1em of abstraction] {};
    \node [input, name=inputAbove, above=1.2em of system, xshift=-3*\xshift] {};
    \node [output, name=out, right=1em of system] {};

    \node [sum, name=sum, left=2em of stabilizer] {};

    \draw [->] (inputAbove) -- node [pos=0.2, right, align=center] {Stochastic noise $\noise$} (system.north-|inputAbove);
    \draw [->] (sum.north) -- (inputB-|sum.north) -- node [pos=0.3, above, align=center] {$u = -Kx $ \\ $ + u' \in \controlSpace$} (system.west);
    \draw [->] (inputC) -- node [left, align=right, xshift=-0.4em] {Noise \\ samples} (abstraction.west);
    
    \node [above=-0.4em of sum, xshift=0.5em] {$+$};

    \draw [-] (system) -- node [pos=0.7, above] {$x$} (out);
    \draw [->] (out) |- node [] {} (controller.east);
    \draw [->] (out|-stabilizer.east) |- node [] {} (stabilizer.east);
    \draw [->] (controller.west) -| node [pos=0.3, below, align=center] {$u' \in \controlSpace'$} (sum);
    \draw [->] (stabilizer.west) -- node [pos=0.3, above] {} (sum);

    \draw [->] ([xshift=-\xshift] controller.south) -- node [left, midway] {$x^+, k$} ([xshift=-\xshift] abstraction.north);
    \draw [->] ([xshift=\xshift] abstraction.north) -- node [right, midway] {Abstract action} ([xshift=\xshift] controller.south);
    
    \end{tikzpicture}
    
    \caption{Two-layer abstraction.}
    \label{fig:2layer}
\end{subfigure}
    \caption{A single-layer abstraction (a), versus our two-layer feedback design framework, which combines a stabilizing controller (linear feedback gain) with a symbolic controller (obtained from the abstraction).
    We impose constraints on both the total input $-Kx+u' \in \controlSpace$ and on the input $u' \in \controlSpace'$.}  
    \label{fig:approach}
\end{figure}
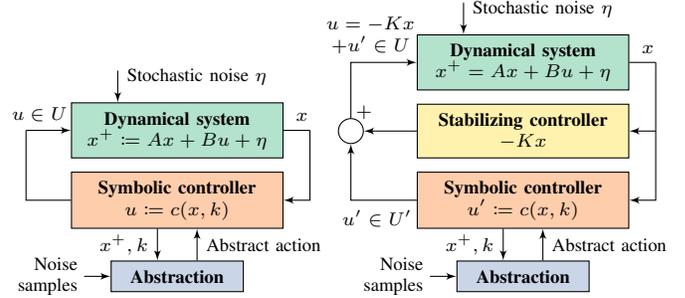

Thus, the question central to this paper is: ``How can the stability of a stochastic dynamical system be exploited to synthesize controllers via finite-state abstractions with smaller underlying graphs?''
Our approach builds upon the hybrid abstraction technique for discrete-time stochastic linear systems developed in \cite{DBLP:conf/aaai/BadingsA00PS22}. 
We consider tasks as \emph{reach-avoid properties}, i.e., reach a set of goal states while always avoiding unsafe states.
The control objective is to design a feedback controller such that the closed-loop system satisfies the reach-avoid task with at least a desired threshold probability.

\change{Inspired by~\cite{DBLP:conf/aaai/BadingsA00PS22}, we create an abstraction for discrete-time stochastic linear systems into an interval Markov decision process (iMDP) with PAC intervals of transition probabilities, which we compute using data-driven techniques for scenario programs with discarded constraints~\cite{DBLP:journals/arc/CampiCG21,DBLP:journals/tac/RomaoPM23}.
A defining characteristic of this abstraction is that each abstract action is associated with a \emph{fixed distribution} over (continuous) successor states.
By contrast, other abstractions typically associate each abstract action with a \emph{fixed control input}, such that the distribution over successor states depends on the precise continuous state where the abstract action is chosen.
With our approach, we avoid this issue at the cost of more restrictive assumptions on the dynamics (\cref{assumption:nonsingular}).}

Instead of abstracting the open-loop dynamics directly (as in \cref{fig:1layer}), we propose the two-layer feedback control design framework in \cref{fig:2layer}.
In this framework, we first stabilize the system with a linear feedback gain and then abstract the stabilized dynamics.
This approach delegates part of the control effort to the stabilizing controller, which allows us to further constrain the control input synthesized in the abstraction.
Especially if the stabilizing controller contributes to satisfying the reach-avoid task, we can reduce the number of edges in the graph of the abstraction significantly (by up to 90\%; see \cref{sec:experiments}) with negligible performance loss.

\subsubsection*{Contributions}
\change{As our main contribution, we extend the abstraction method from~\cite{DBLP:conf/aaai/BadingsA00PS22} to the two-layer abstraction in \cref{fig:2layer} and show that this new scheme can be used to construct abstractions with smaller underlying graphs.
We show that the formal relation induced by the abstraction from~\cite{DBLP:conf/aaai/BadingsA00PS22} carries over to our setting.
Our experiments exemplify the conditions necessary for a good trade-off between abstraction size and controller performance.
}

\section{Preliminaries}
\label{sec:background}

\change{A probability space $\tuple{\Omega, \mathcal{F}, \Prob}$ consists of an uncertainty space $\Omega$, a $\sigma$-algebra $\mathcal{F}$, and a probability measure $\Prob \colon \mathcal{F} \to [0,1]$.
A random variable $x$ is a measurable function $x \colon \Omega \to \R^n$ for some $n \in \N$, which takes value $x(\omega) \in \R^n$ for $\omega \in \Omega$.}
We denote the set of all distributions for both a continuous and discrete set $X$ by $\distr{X}$.
The convex hull of a set of points $\{v_1, \ldots, v_m\}$ in $\R^n$ is $\conv(v_1,\ldots,v_n)$. 
We denote the interior of $V \subset \R^n$ by $\interior{V}$ and the pseudoinverse of matrix $B$ by $B^\dag$.
The indicator function $\mathbb{1}_V(x)$ for a set $V \subset \R^n$ is one if $x \in V$ and zero otherwise.
\change{The Cartesian product of an interval is written as $[a,b]^n$, for $a \leq b$, $n \in \N$.}

\subsection{Stochastic dynamical systems}

Consider a discrete-time, stochastic linear dynamical system $\system$ where the state space variable $x_k \in \R^n$ evolves as
\begin{equation}
    \label{eq:linear_system}
    \system : \,\, x_{k+1} = A\state_k + B\control_k + \noise_k, \quad x_0 = \bar{x},
\end{equation}
where $\bar{x} \in \R^n$ is the initial condition, $\control_k \in \R^p$ is the control input, and $\noise_k \in \R^n$ is a stochastic noise. Matrices $A$ and $B$ have the appropriate dimensions.
The control input is constrained to a convex polytope $\controlSpace = \{ u \in \R^p : Gu \leq g\} \subset \R^p$ called the admissible control input, where $G \in \R^{q \times p}$ and $g \in \R^q$. 
Moreover, $(\noise_k)_{k \in \N_0}$ is a discrete-time stochastic process defined on a probability space $(\Omega,\mathcal{F},\Prob)$, with its natural filtration (see \cite{Durrett96}~for details). 
Thus, $(\state_k)_{k \in \N_0}$ is also a stochastic process in the same probability space.
%
\begin{assumption}[Non-singular and controllable]
    \label{assumption:nonsingular}
    Matrix $A \in \R^{n \times n}$ is non-singular, and the pair $(A,B) $ is controllable.
\end{assumption}
\begin{assumption}[Noise distribution]
    \label{assumption:unknown_noise}
    For all $(x,u) \in \R^n \times \controlSpace$ and all (Borel measurable~\cite{Salamon16}) sets $V \subset \R^n$, let $\mu_k(V;x,u) = \Prob\{ \omega \in \Omega : Ax + Bu + \noise_k(\omega) \in V\} \in [0,1]$ be the conditional probability that the next state belongs to $V$, given the current state-input pair. Then we have that:
    \begin{itemize}
        \item (\textit{identically distributed}): $\mu_k(V;x,u)$ is time-invariant; hence, we may drop the time index and write $\mu(V;x,v)$;
        \item (\textit{independence}): For any finite collection $V_1,\ldots,V_m \subset \R^n$ and state-action pairs $\{(x_i,u_i)\}_{i =1}^m$, we have that 
        {\small
        \begin{equation*}
            \Prob\{\omega \in \Omega : \bigcap_{i=1}^m Ax_i + Bu_i + \noise(\omega) \in V_i \} = \prod_{i = 1}^m \mu(V_i; x_i,u_i);
        \end{equation*}
        }
        \item (\textit{density}): The Radon-Nikodym derivative of $\mu(V;x,u)$ exists for all pairs $(x,u)$, and $\mu(V';x,u) $ is a measurable function from $\R^n\times U$ to $[0,1]$ for all $V' \subset \R^n$.
        However (and importantly), the density $\mu$ is unknown.
    \end{itemize}
\end{assumption}
\change{\cref{assumption:unknown_noise} requires process $(\noise_k)_{k \in \N}$ to be i.i.d. and to possess a well-defined probability density. 
However, we do not assume knowledge of its density.}
Under  \cref{assumption:unknown_noise}, system \eqref{eq:linear_system} can be equivalently expressed using the stochastic kernel (see \cite[Chapter~7]{Bertsekas.Shreve78}~for details) $T \colon \R^n \times \controlSpace \to \distr{\R^n}$,
which maps each state-input pair to a distribution over states:
\begin{equation}
    \label{eq:linear_system_kernel}
    x_{k+1} \sim T(\cdot \mid x_k,u_k), \quad x_0 = \bar{x}.
\end{equation}
For example, if $\noise_k \sim \Gauss{\mean}{\cov}, \, k \in \N$ is Gaussian, the stochastic kernel is given by $T(\cdot \mid x_k,u_k) = \Gauss{Ax_k + Bu_k + \mean}{\cov}$.
A time-varying feedback controller chooses the inputs $\control_k \in U$ by measuring the current state.
%
\begin{definition}
    \label{def:controller}
    A time-varying feedback controller is a measurable function $\controller \colon \R^n \times \N_0 \to \controlSpace$, which maps a state $x \in \R^n$ and a time step $k \in \N_0$ to a control input $u \in \controlSpace$.
\end{definition}

\subsection{Problem statement}

Given a system as in \eqref{eq:linear_system}, our goal is to find a time-varying controller $c$ such that the closed-loop system with $u_k = c(x_k,k)$, for all $k \in \N$, satisfies some objective.
Specifically, we consider the objective of reaching a desired region $\goalRegion$ of the state space while always avoiding unsafe states $\unsafeRegion$.
\begin{definition}[Reach-avoid property]
    \label{def:ReachAvoidProperty}
    A reach-avoid property is a tuple $\tuple{\goalRegion, \unsafeRegion, \finiteHorizon}$ of a set of goal states $\goalRegion$ and unsafe states $\unsafeRegion$ (such that $\goalRegion \cap \unsafeRegion = \varnothing$), and horizon $\finiteHorizon \in \N$.
\end{definition}

For system $\system$ in \eqref{eq:linear_system}, we consider $\goalRegion, \unsafeRegion \subset \R^n$ as compact subsets of $\R^n$ and use the notation $\pproperty = (\goalRegion,\unsafeRegion, \finiteHorizon)$ for this reach-avoid property.
A trajectory $\state_0, \state_1, \ldots, x_\finiteHorizon$ of length $\finiteHorizon$ for system $\system$ satisfies $\pproperty$ if there exists a $k \in \{0,\ldots,\finiteHorizon\}$ such that $x_k \in \goalRegion$ and $x_{k'} \notin \unsafeRegion$ for all $k' \in \{0,\ldots,k\}$.
Under a fixed controller, system $\system$ induces a stochastic process $(x_k)_{k \in \N_0}$ for which we can reason over the probability of satisfying a reach-avoid property~\cite{Bertsekas.Shreve78}. 
\begin{definition}[Satisfaction of $\pproperty$]
\label{def:satisfaction}
For a fixed time-varying feedback controller $c:\R^n\times \N \rightarrow \controlSpace$ and a given initial condition $\bar{\state}$, the satisfaction probability of $\pproperty$ is denoted by
\begin{align}%
\Prr_\system^c (\bar{\state} \models \pproperty) \coloneqq
\Prob \Big\{ &
     \omega \in \Omega
    :
    \exists k \in \{0,\ldots,\finiteHorizon\}, \state_k(\omega) \in \goalRegion, 
    \nonumber
    \\ & \state_{k'}(\omega) \notin \unsafeRegion \,\, \forall k' \in \{0,\ldots,k \}, 
    \\ & \state_0 = \bar{\state}, \,
    x_{k + 1} \sim T(\cdot\mid x_k, c(x_k,k))
    \nonumber
\Big\}.
\end{align}
\end{definition}

We are now able to describe our control objective.
\begin{problem}
\label{prob:Problem}
Given a linear stochastic system $\system$ as in \eqref{eq:linear_system}, with initial state $\bar{\state}$, a reach-avoid property $\pproperty$ as in Definition \ref{def:ReachAvoidProperty}, and a desired threshold probability $\rho \in [0,1]$, design a time-varying feedback controller $\controller$ such that $\Prr_\system^c(\bar{\state}\models \pproperty) \geq \rho$.
\end{problem}

\subsection{Markov decision processes}

Our approach for solving \cref{prob:Problem} is to create a discrete abstraction of system $\system$ into an MDP with imprecise transition probabilities.
To distinguish from system $\system$, we call abstract states \emph{locations} and a controller for the abstraction a \emph{policy}.
\begin{definition}
\label{def:iMDP}
\change{An \emph{interval MDP} (iMDP) $\imdp$ is defined as a tuple $\imdp \coloneqq \tuple{ S, \bar{s}, \Actions, \Plb, \Pub }$, where}
\begin{itemize}
    \item $S$ is a finite set of locations, with initial condition $\bar{s} \in S$,
    \item $\Actions$ is a finite set of actions, with $\Actions(s) \subset \Actions$ denoting the actions enabled in location $s \in S$, and
    \item $P \colon S \times \Actions \rightrightarrows \distrd{S}$ maps each pair $(s,a)$ to a set of distributions defined by $\Plb(s,a), \Pub(s,a) \in [0,1]^{|\States|}$ as 
        \begin{align}
            P(s,a) = \Big\{ p \in [0,1]^{|S|} : \, &\Plb_{s'}(s,a)   \leq p_{s'} \leq \Pub_{s'}(s,a),   \nonumber 
            \\  & \forall s' \in S,~\sum_{s' \in S} p_{s'} = 1 \Big\}.
            \label{eq:iMDP-transition}
        \end{align}
\end{itemize}
\end{definition}

For any iMDP, we require that $\Plb_{s'}(s,a) \leq \Pub_{s'}(s,a)$ for all $s, s' \in S, a \in \Actions(s)$, and that $\sum_{s' \in  S} \Plb_{s'}(s,a) \leq 1 \leq \sum_{s' \in S} \Pub_{s'}(s,a)$ for all $s \in S, a \in \Actions(s)$; otherwise, the set \eqref{eq:iMDP-transition} may be empty. 
\change{An \emph{adversary} fixes a probability $P'(s,a)(s') \in P(s,a)$ for all pairs $(s,a) \in \States \times \Actions$.
Importantly, a different $P'$ can be chosen every time the same pair $(s,a)$ is encountered.
For brevity, we overload notation and use $P' \in P$ to denote choosing an adversary in the set of all adversaries.
}


Actions are chosen by a time-varying policy $\policy \colon S \times \N_0 \to \distrd{\Actions}$, which maps every location $s \in S$ and time $k \in \N_0$ to an action $a \in \Actions$.\footnote{Time-varying policies are needed to attain optimal values for the time-bounded properties we consider~\cite[Ch.~10.6]{DBLP:books/daglib/0020348}. An equivalent approach is to encode the time step explicitly in the iMDP by defining the set of locations $S' = S \times \{0,\ldots,\finiteHorizon\}$ and using memoryless policies $\policy \colon S' \to \Actions$ instead.}
The set of all admissible policies\footnote{The policy class $\policySpace$ suffices to obtain optimal policies for iMDP~\cite{DBLP:conf/cav/PuggelliLSS13}.} is 
{\small
$$
    \policySpace = \big\{ \policy \colon S \times \N_0 \to \distrd{\Actions} \, \mid \, \policy(s,k)(a) > 0 \implies a \in \Actions(s) \big\}.
$$}%
Thus, any policy $\policy \in \policySpace$ requires that for all $k \in \N$ and $s\in S$, the support of the distribution $\policy(s,k)$ is contained in $\Actions(s)$.

For an iMDP, a reach-avoid property $\pproperty' = (S_G, S_U, \finiteHorizon)$ (cf.~\cref{def:ReachAvoidProperty}) is defined over the locations, i.e., $S_G, S_U \subseteq S$.
The semantics over trajectories $s_0, s_1,\ldots, s_{\finiteHorizon}$ are the same as for system $\system$. 
Similar to \cref{def:satisfaction}, for any policy $\policy \in \policySpace$ and transition function $P' \in P$, we denote the probability of satisfying $\pproperty'$ by $\Prr_{P'}^\policy(\bar{s} \models \pproperty')$.\footnote{Fixing a policy $\policy \in \policySpace$ and an adversary with $P' \in P$ induces a Markov chain with probability measure $\Prr_{P'}^\policy$; see~\cite[Def.~10.10]{DBLP:books/daglib/0020348} for details.}
An optimal (robust) policy $\policy^\star \in \policySpace$ optimizes the next min-max problem:
\begin{equation}
    \label{eq:optimal_policy}
    \policy^* \in \argmax_{\policy \in \policySpace} \, \min_{P' \in P} \Prr_{P'}^\policy(
    \bar{s} \models \pproperty').
\end{equation}

\begin{remark}
\label{remark:rewards}
We can alternatively express reach-avoid properties by extending the iMDP with a reward function $R \colon S \to \R_{\geq 0}$ defined as $R(s) = \mathbb{1}_{S_G}(s)$, and making all locations $s \in S_U$ absorbing, i.e., $\Plb(s,a,s) = \Pub(s,a,s) = 1 \,\forall s \in S_U$, $a \in \Actions(s)$.
For details, we refer to~\cite[Def.~10.71]{DBLP:books/daglib/0020348}.
\end{remark} 

\section{Abstraction-Based Controller Synthesis}
\label{sec:abstraction}

We formally relate the dynamics in \eqref{eq:linear_system} to a finite iMDP, using a probabilistic variant of a feedback refinement relation~\cite{DBLP:journals/tac/ReissigWR17}.
Then, we establish that the abstraction proposed by~\cite{DBLP:conf/aaai/BadingsA00PS22} induces this relation.
A measurable set $R \subseteq \R^n \times S$ is called a binary relation, for which we use notations $R(x) = \{ s \in S: (x,s) \in R \}$ and $R^\inv(s) = \{ x \in \R^n : (x,s) \in R \}$.

\begin{definition}[{\cite{DBLP:journals/jair/BadingsRAPPSJ23}}]
    \label{def:relation}
    A binary relation $R\subset \R^n \times S$ is a \emph{probabilistic feedback refinement relation} from iMDP $\imdp = \tuple{S,\bar{s},\Actions,\Plb,\Pub}$ to system $\system$ defined by \eqref{eq:linear_system_kernel} if
    \begin{enumerate}
        \item for the initial state-location, we have $(\bar{\state}, \bar{s}) \in R$, and
        \item for all $(\state, s) \in R$ and $a \in \Actions(s)$, there exists a $u \in \controlSpace$ such that for all $s' \in S$, it holds that
    \end{enumerate}
    \begin{align}
        \label{eq:relation_bounds}
        &\Plb_{s'}(s,a) 
        \leq \int_{\R^n} \mathbb{1}_{R^\inv(s')}(\xi) T( d\xi \mid x,u) 
        \\
        &\enskip = \Prob\big\{ \omega \in \Omega: Ax + Bu + \noise(\omega) \in R^{-1}(s') \big\}
        \leq \Pub_{s'}(s,a).
        \nonumber
    \end{align}
\end{definition}

Similar to~\cite{DBLP:journals/tac/ReissigWR17}, we denote a probabilistic feedback refinement relation $R$ from $\imdp$ to $\system$ by $\imdp \preceq_R \system$.
Moreover, we also use the relation $R$ in \cref{def:relation} to relate reach-avoid properties between system $\system$ and an iMDP.
\begin{definition}
    \label{def:property_relation}
    A pair of reach-avoid properties $\pproperty = \tuple{\goalRegion, \unsafeRegion, \finiteHorizon}$ and $\pproperty'= \tuple{S_G, S_U, \finiteHorizon}$ is consistent under a relation $R \subset \R^n \times S$, denoted by $\pproperty' \preceq_R \pproperty$ if
    \begin{align}
        S_G &= \{s \in S : R^\inv(s) \subseteq \goalRegion \},
        \\
        S_U &= \{s \in S : R^\inv(s) \cap \unsafeRegion \neq \varnothing \}.
    \end{align}
\end{definition}

Intuitively, given an iMDP path $s_0, s_1, \ldots, s_\finiteHorizon$ that satisfies $\pproperty'$, the relation $\pproperty' \preceq_R \pproperty$ implies that \emph{all related trajectories} $x_0, x_1, \ldots, x_\finiteHorizon$, i.e., trajectories for which $(x_i, s_i) \in R \,$ for all $ i = 0,\ldots,\finiteHorizon$, must satisfy $\pproperty$.
The following result, which is proven in~\cite{DBLP:journals/jair/BadingsRAPPSJ23}, shows that \cref{def:relation,def:property_relation} can be used to synthesize correct-by-construction controllers for system $\system$.

\begin{theorem}[{\cite{DBLP:journals/jair/BadingsRAPPSJ23}}]
    \label{thm:existence_controller}
    \change{Consider a system $\system$ as in \cref{eq:linear_system}, an iMDP as in \cref{def:iMDP}, and a relation $R \subset \R^n \times S$ such that $\imdp \preceq_R \system$.
    Also let properties $\pproperty = \tuple{\goalRegion, \unsafeRegion, \finiteHorizon}$ and $\pproperty' = \tuple{S_G, S_U, \finiteHorizon}$ be such that $\pproperty' \preceq_R \pproperty$.
    Then, for any policy $\pi \in \policySpace$, there exists a controller $c$ as in \cref{def:controller} such that}
    \begin{equation}
        \label{eq:existence_controller}
        \Prr_\system^c(\bar{x} \models \pproperty) \geq
        \min_{P' \in P} \Prr_{P'}^\policy(\bar{s} \models \pproperty').
    \end{equation}
\end{theorem}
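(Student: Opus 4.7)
The plan is to prove the theorem in three stages: construct an explicit controller $\controller$ from the given policy $\policy$ using the relation $R$, couple the resulting closed-loop trajectory with an iMDP run under $\policy$, and then transfer satisfaction of $\pproperty'$ to $\pproperty$.

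First, I construct $\controller$ pointwise. At time $k$, given $\state_k \in \R^n$, choose a location $s \in R(\state_k)$, sample $a \sim \policy(s,k)$, and invoke clause~(2) of \cref{def:relation} to obtain an input $\control \in \controlSpace$ such that the concrete kernel $T(\cdot \mid \state_k, \control)$ satisfies the interval bounds $[\Plb_{s'}(s,a), \Pub_{s'}(s,a)]$ for every $s' \in S$. Set $\controller(\state_k, k) := \control$. Measurability of this map, needed to fit \cref{def:controller}, would be established by a measurable selection theorem (e.g., Kuratowski--Ryll-Nardzewski), since we select both $s$ from $R(\state_k)$ and $\control$ from a feasible input set.

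Second, I argue inductively that the closed-loop process $(\state_k)$ can be coupled with an iMDP trajectory $(s_k)$ driven by $\policy$ so that $(\state_k, s_k) \in R$ almost surely. The base case follows from clause~(1) of \cref{def:relation}, which gives $(\bar{\state}, \bar{s}) \in R$. For the inductive step, with $(\state_k, s_k) \in R$, sampled action $a_k$, and chosen input $\control_k = \controller(\state_k, k)$, define the adversary
\begin{equation*}
    P^*(s_k, a_k)(s') \;:=\; \Prob\bigl\{ A\state_k + B\control_k + \noise \in R^\inv(s') \bigr\}.
\end{equation*}
By \eqref{eq:relation_bounds} this distribution lies in the set $P(s_k, a_k)$ from \eqref{eq:iMDP-transition}, so $P^*$ is a legitimate adversary, and drawing $s_{k+1}$ from $P^*(s_k, a_k)$ is jointly consistent with $\state_{k+1} \sim T(\cdot \mid \state_k, \control_k)$ together with the selection $s_{k+1} \in R(\state_{k+1})$, preserving the invariant $(\state_{k+1}, s_{k+1}) \in R$.

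Third, I transfer the satisfaction probability. By \cref{def:property_relation}, any iMDP trajectory $s_0, \ldots, s_\finiteHorizon$ satisfying $\pproperty'$ forces every coupled concrete trajectory to satisfy $\pproperty$: a witness $s_k \in S_G$ gives $\state_k \in R^\inv(s_k) \subseteq \goalRegion$, while $\state_{k'} \in \unsafeRegion$ would imply $R^\inv(s_{k'}) \cap \unsafeRegion \neq \varnothing$ and hence $s_{k'} \in S_U$, contradicting safety of the abstract path. Combining with the coupling yields
\begin{equation*}
    \Prr_\system^\controller(\bar{\state} \models \pproperty) \;\geq\; \Prr_{P^*}^\policy(\bar{s} \models \pproperty') \;\geq\; \min_{P' \in P} \Prr_{P'}^\policy(\bar{s} \models \pproperty'),
\end{equation*}
which is \eqref{eq:existence_controller}. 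I expect the main obstacle to be formal rigor around the coupling: since $P^*$ is state-dependent along the trajectory and the selection defining $\controller$ is non-unique, one must verify that the induced process on $S$ is a well-defined iMDP run under $\policy$ for a (possibly history-dependent) adversary in $P$, and that the measurable selection is compatible with $T$ so that the product probability space carries both processes and the resulting law dominates the concrete satisfaction event.
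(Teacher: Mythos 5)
Your argument is correct, and it is in substance the standard proof of this result. Be aware, though, that the paper does not actually prove \cref{thm:existence_controller} itself: the theorem is imported from the cited reference, and the text only remarks that the proof rests on the preservation of PCTL satisfaction under the relation $R$. Your direct construction---refine the policy through clause~(2) of \cref{def:relation}, couple the concrete trajectory with an abstract run by letting the adversary play the true transition probabilities $\Prob\{Ax_k+Bu_k+\noise \in R^\inv(s')\}$, which lie in $P(s_k,a_k)$ by \eqref{eq:relation_bounds}, and transfer satisfaction via \cref{def:property_relation}---is precisely the content of that PCTL-preservation argument, specialized to reach-avoid. Two of the points you flag as ``obstacles'' deserve to be closed rather than left open, and both close easily. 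First, your adversary depends on the concrete state $x_k$ and not only on $(s_k,a_k)$, so the abstract marginal process is a run under a history-dependent adversary; this is admissible because the paper explicitly allows $P'(s,a)$ to be re-chosen at every visit, and since the set $P(s,a)$ in \eqref{eq:iMDP-transition} is convex, averaging over the conditional law of $x_k$ given the abstract history keeps the induced transition distribution inside $P(s,a)$, so the marginal law is dominated by $\min_{P' \in P}$ as required. Second, for a randomized policy $\pi$ your refined controller samples an action and is therefore not literally a deterministic measurable function as in \cref{def:controller}; either restrict to deterministic policies (which suffice for the optimum in \eqref{eq:optimal_policy}) or observe that the satisfaction probability under a randomized policy is a convex combination over deterministic selections, so the bound survives. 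The non-uniqueness of $s \in R(x)$ on partition boundaries is harmless under \cref{assumption:unknown_noise}, since the noise has a density and polytope boundaries are null sets.
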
 

The proof of \cref{thm:existence_controller} uses that, for MDPs, the relation $R$ preserves the satisfaction of probabilistic computation tree logic (PCTL), in which the reach-avoid property in \cref{def:ReachAvoidProperty} can be expressed~\cite{DBLP:journals/iandc/HermannsPSWZ11}.
For iMDPs, this preservation of probabilistic satisfaction leads to the inequality in \eqref{eq:existence_controller}.
In fact, \cref{thm:existence_controller} can be extended to any PCTL formula; see, e.g.,~\cite{DBLP:conf/qest/RickardBRA23}.
However, since our contributions are unrelated to the property, we focus on reach-avoid properties for simplicity.

\subsection{Abstraction procedure}

We revisit the abstraction developed in~\cite{DBLP:conf/aaai/BadingsA00PS22,DBLP:journals/jair/BadingsRAPPSJ23}, which first uses a model-based approach to compute the abstract locations and actions.
Second, a data-driven approach is used to capture the stochastic uncertainty into intervals of transition probabilities.
The resulting abstraction is an iMDP that creates a relation as in \cref{def:property_relation} with a pre-defined confidence level.

\subsubsection{Model-based locations and actions}
The locations of the abstraction are given by a polyhedral partition of a bounded portion $\partitionSpace \subset \R^n$ of the state space of system $\system$:
\begin{definition}
    A polyhedral partition of $\partitionSpace \subset \R^n$ is a finite collection of sets $\{\region_1,\ldots, \, \region_L, \R^n \setminus \partitionSpace\}$ such that 
    \begin{enumerate}
        \item Each $\region_i$ is a convex polytope, i.e., $\region_i = \{x \in \R^n : H_i x \leq h_i \}$ for $H_i \in \R^{p_i \times n}$, $h_i \in \R^{p_i}$, and $p_i \in \N$;
        \item $\partitionSpace = \bigcup_{i=1}^L \region_i$;
        \item $\interior{\region_i} \bigcap \interior{\region_j} = \emptyset, \,\, \forall i,j \in \{1,\ldots,L\}, \,\, i \neq j$.
    \end{enumerate}
\end{definition}

Adding the final element $\R^n \setminus \partitionSpace$ ensures that the partition covers $\R^n$.
A partition creates an equivalence relation~\cite{belta2017formal}.
\begin{remark}[Equivalence relation]
    \label{remark:induced_relation}
    A polyhedral partition of $\partitionSpace \subset \R^n$ creates an equivalence relation ${\sim} \subset \R^n \times \R^n$, such that $[x]_\sim \coloneqq \{ x' \in \R^n \mid x \sim x' \}$ is the equivalence class of state $x \in \R^n$, where $x \sim x'$ denotes that $(x,x') \in {\sim}$.
    The set of all equivalence  classes $\R^n /{\mathord{\sim}} = \{ [x]_\sim \mid x \in \R^n \} = \{\region_1,\ldots,\region_L, \, \R^n \setminus \partitionSpace\}$ is the partition itself.
\end{remark}
The locations of the abstraction are the equivalence classes of the partition, i.e., $S \coloneqq \R^n /{\mathord{\sim}}$.
Next, the set of actions is $\Actions \coloneqq \{a_1, \ldots, a_q\}$, $q \in \N$, where each $a \in \Actions$ is associated with a target point $\target_a \in \R^n$ in the state space of $\system$.
For each point $\target_a$, for $a \in \Actions$, we define the backward reachable set as
\begin{equation}
\begin{split}
    \label{eq:backward_reachable_set}
    \BackReach(d_a) &= \{ x \in \R^n : d_a = Ax + Bu, \, u \in \controlSpace \}
    \\
    &= \mathrm{conv}\big( A^\inv(\target_a - B v^i) \,:\, i = 1,\ldots,q \big),
\end{split}
\end{equation}
where the second equality follows from \cref{assumption:nonsingular}.
The set $\Actions(s) \subseteq \Actions$ of actions enabled in location $s \in S$ is
\begin{equation}
    \label{eq:enabled_actions}
    \Actions(s) = \big\{ 
        a \in \Actions \mid
        s \subseteq 
        \BackReach(\target_a)
    \big\}.
\end{equation}
Thus, action $a \in \Actions$ is enabled in location $s \in S$ only if the equivalence class $s$ is contained in the backward reachable set $\BackReach(\target_a)$.
\change{Choosing abstract action $a \in \Actions$ is defined such that $\target_a = Ax + Bu$, which implies that $u = B^\dag(\target_a - Ax)$.\footnote{Action $a$ is only enabled in equivalence classes that are a subset of the backward reachable set $\BackReach(\target_a)$.
Thus, we have $u = B^\dag(d_a-Ax) \in \controlSpace$ by construction for any state $x \in \bigcup \big\{ s \in S \mid a \in \Actions(s) \big\} \subset \R^n$.}
Since the noise is additive, the successor state is $\target_a + \noise$, which is a random variable with distribution $T(\cdot \mid x, B^\dag(\target_a - Ax) )$.
}%
\begin{remark}
\change{
Other abstraction methods typically associate each $a \in \Actions$ with a fixed input $\hat{u} \in \controlSpace$.
Thus, the distribution $T(\cdot \mid x, \hat{u} )$ over successor states associated with choosing action $a$ depends on the precise state $x \in \R^n$.
By contrast, we associate each abstract action $a \in \Actions$ with a fixed distribution $T(\cdot \mid x, B^\dag(\target_a - Ax) )$ over successor states.
Since $Ax + Bu + \noise = Ax + B B^\dag(d_a-Ax) + \noise = \target_a + \noise$, this distribution is the same for any state $x \in \R^n$ for which $a \in \Actions([x])$, i.e., where abstract actions $a$ is enabled.
}
\end{remark}

\subsubsection{Data-driven transition probabilities}
As the distribution of the noise is unknown, we use a finite set of samples of $\noise_k$ to compute an interval on the probability of reaching each location $s \in S$.
Formally, let $\{\delta_1,\ldots,\delta_N\} \in \Omega^N$, where  $N \in \N$ is the number of samples\footnote{Since \eqref{eq:linear_system} is time-invariant and has additive noise, we can obtain these samples from a \textit{single} trajectory of length $N$ starting at an arbitrary state $\bar{x}$.} of $\noise_k$.
For each pair $s, s' \in S$ and enabled action $a \in \Actions(s)$, the interval $[\Plb_{s'}(s,a), \Pub_{s'}(s,a)]$ is computed such that the exact probability is contained with at least a desired probability of $1-\beta$, with $\beta \in (0,1)$:
\begin{align}
    \nonumber
    \Prob^N \Big\{ \,\,
    \Plb_{s'}(s,a) 
    &\leq 
    \int_{\R^n} \mathbb{1}_{s'}(\xi) T( d\xi \mid x, \, [B^\dag(d_a-Ax)] )
    \\
    &\leq
    \Pub_{s'}(s,a)
    \,\, \Big\} \geq 1-\beta,
    \label{eq:iMDP_intervals}
\end{align}
\change{where $x$ is such that $[x] = s$, state $s' \in \States = \R^n /{\mathord{\sim}}$ is interpreted as a subset of $\R^n$,} and the outer probability is taken w.r.t. the upper bound $\Pub_{s'}(s,a)$ and lower bound $\Plb_{s'}(s,a)$ of the interval, which are random variables in the space $\Omega^N$.

In practice, we compute these intervals using the method from~\cite{DBLP:journals/jair/BadingsRAPPSJ23}, which leverages the scenario approach~\cite{DBLP:journals/arc/CampiCG21}.
\change{This method implicitly solves a set of $2N$ convex scenario programs with discarded constraints~\cite{DBLP:journals/jota/CampiG11} and uses~\cite{DBLP:journals/tac/RomaoPM23} to compute tight bounds on the probability of \emph{constraint violation} for each of these programs.
By construction, one of these $2N$ probabilities of constraint violations lower bounds the transition probability in \cref{eq:iMDP_intervals}, and another one is an upper bound.
By choosing the confidence level on the probability of constraint violation for each scenario program as $1-\frac{\beta}{2N}$, we obtain a probability interval such that \cref{eq:iMDP_intervals} holds.}
It is shown in~\cite{DBLP:journals/jair/BadingsRAPPSJ23} that these scenario programs can be solved analytically based on its geometry, making the approach highly efficient. 
Due to space restrictions, we refer to~\cite[Theorem~1]{DBLP:journals/jair/BadingsRAPPSJ23} for formal details.

\subsubsection{Complete abstraction}
\label{sec:abstraction_full}
Putting all elements together, we define the iMDP abstraction $(S, \bar{s}, \Actions, \Plb, \Pub)$,
with
\begin{itemize}
    \item Set of locations $S = \R^n /{\mathord{\sim}}$, with $\bar{s} = [\bar{x}]$;
    \item Actions $\Actions = \{a_1,\ldots,a_q\}$, for $q \in \N$, with the enabled actions $\Actions(s)$ defined by \eqref{eq:enabled_actions} for all $s \in S$;
    \item For each $s,s' \in S$ and $a \in \Actions(s)$, the lower and upper bound probabilities $\Plb_{s'}(s,a)$ and $\Pub_{s'}(s,a)$ are such that \eqref{eq:iMDP_intervals} holds for a desired value of $\beta \in (0,1)$.
\end{itemize}

\subsection{Controller synthesis}
\change{We show that the equivalence relation ${\sim} \subset \R^n \times \States$ created by the partition (cf. \cref{remark:induced_relation}) is (with a certain probability) a probabilistic feedback refinement relation $R$ from iMDP $\imdp$ to system $\system$, as defined by the conditions in \cref{def:relation}.}
%
\begin{theorem}[{\cite[Thm.~2]{DBLP:journals/jair/BadingsRAPPSJ23}}]
    \label{thm:pac_relation}
    For a given polyhedral partition that creates an equivalence relation $\sim$, let $\imdp$ be the iMDP abstraction for system $\system$ with $\beta \in (0,1)$.
    Then, it holds that $\Prob^N\{ \imdp \preceq_\sim \system \} \geq 1-\beta \cdot |\Actions| \cdot |S|$.
\end{theorem}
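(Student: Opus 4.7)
My plan is to verify the two clauses of \cref{def:relation} in turn and then close the argument with a union bound over the data-driven transition probability intervals. Throughout, I will exploit that for the equivalence relation ${\sim}$ induced by the partition, the preimage $R^{-1}(s)$ is precisely the equivalence class $s$ itself, regarded as a subset of $\R^n$. This identification is what allows the integral on the right-hand side of \cref{eq:relation_bounds} to coincide with the probability in \cref{eq:iMDP_intervals} that the sampling-based scheme is designed to bound.

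First I will check the initial-state condition: since $\bar{s} = [\bar{x}]_\sim$ by the construction in \cref{sec:abstraction_full}, we have $(\bar{x},\bar{s}) \in {\sim}$ immediately, so clause (1) of \cref{def:relation} holds deterministically, with no contribution to the failure probability. Next I turn to clause (2). Given $(x,s) \in {\sim}$ and $a \in \Actions(s)$, I will exhibit the explicit control input $u = B^{\dagger}(d_a - Ax)$. Admissibility $u \in \controlSpace$ follows because $a \in \Actions(s)$ forces $s \subseteq \BackReach(d_a)$ by \cref{eq:enabled_actions}, so in particular $x \in \BackReach(d_a)$, and by \cref{eq:backward_reachable_set} combined with \cref{assumption:nonsingular} there exists an admissible input realizing $Ax + Bu = d_a$. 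Substituting, one gets $Ax + Bu + \noise = d_a + \noise$, so the distribution of the successor state is $T(\cdot \mid x, u) = d_a + \mu$ and is independent of $x$ (as long as $(x,s) \in {\sim}$); this is the crucial uniformity that lets a single interval serve for the whole equivalence class.

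With the integral in \cref{eq:relation_bounds} now equal to $\Prob\{\omega : d_a + \noise(\omega) \in s'\}$, the data-driven construction of the transition intervals (via {\cite[Thm.~1]{DBLP:journals/jair/BadingsRAPPSJ23}}) guarantees, for each fixed pair $(s,a)$, that the \emph{simultaneous} sandwich $\Plb_{s'}(s,a) \leq \Prob\{d_a + \noise \in s'\} \leq \Pub_{s'}(s,a)$ over all successors $s' \in S$ holds with probability at least $1-\beta$ over the random draw of the $N$ noise samples. This is the version of \cref{eq:iMDP_intervals} that the scenario-with-discarded-constraints machinery delivers with a joint confidence for all $s'$ from one batch of samples, which is why we do not pay an extra factor of $|S|$ at this step.

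Finally, to bind together the quantifiers in \cref{def:relation}, I apply a union bound: there are at most $|\States| \cdot |\Actions|$ enabled pairs $(s,a)$, so the probability that \cref{eq:relation_bounds} fails for some such pair is at most $\beta \cdot |\States| \cdot |\Actions|$, giving the stated bound $\Prob^N\{\imdp \preceq_\sim \system\} \geq 1 - \beta \cdot |\Actions| \cdot |\States|$. The only step that needs real care is the second one, specifically the argument that $u = B^{\dagger}(d_a - Ax)$ is both admissible and leaves the successor distribution independent of the continuous state within an equivalence class; once that is established, the probabilistic part reduces to a clean union bound over the scenario-based intervals.
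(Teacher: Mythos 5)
Your proof is correct and rests on the same core argument as the paper's: identify the relation with the partition, use that each action $a$ induces a fixed successor distribution $\target_a + \noise$ independent of the continuous state within an equivalence class, and close with a union bound over the data-driven intervals. The one place you diverge is the bookkeeping that yields the factor $|\Actions|\cdot|S|$ rather than $|\Actions|\cdot|S|^2$. The paper's one-line proof obtains it by observing that the iMDP has at most $|\Actions|\cdot|S|$ \emph{unique} probability intervals---because the successor distribution of action $a$ does not depend on the source location $s$, the interval for $(s,a,s')$ depends only on $(a,s')$---and then union-bounds over those, using only the per-transition guarantee \cref{eq:iMDP_intervals}. You instead union-bound over the $|S|\cdot|\Actions|$ source--action pairs and invoke a \emph{joint} confidence of $1-\beta$ over all successors $s'$ for each fixed $(s,a)$. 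That joint claim is stronger than what \cref{eq:iMDP_intervals} literally states, but it is supported by the construction described in the paper (the $2N$ scenario programs per action, each at confidence $1-\beta/(2N)$, hold simultaneously with probability at least $1-\beta$), so your route is sound; it simply leans on a property of the interval construction that the paper's proof does not need. Your explicit verification of clause (2) of \cref{def:relation}---admissibility of $u = B^{\dag}(\target_a - Ax)$ via \cref{eq:enabled_actions} and \cref{eq:backward_reachable_set}, and the state-independence of the resulting successor distribution---is material the paper delegates to the abstraction construction in \cref{sec:abstraction_full}, and it is correct.
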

\begin{proof}
    The iMDP has at most $|\Actions| \cdot |S|$ unique probability intervals (see \cite{DBLP:journals/jair/BadingsRAPPSJ23} for details). 
    We have that $\imdp \preceq_\sim \system$ if all of these intervals contain the exact probability, which (by applying the union bound) is satisfied with a probability of at least $1 - \beta \cdot |\Actions| \cdot |S|$.
    Thus, the claim follows.
\end{proof}

Under \cref{assumption:nonsingular}, we can refine any policy for the abstract iMDP into a controller of the form in \cref{def:controller}.
\begin{definition}[Controller refinement]
    \label{def:refined_controller}
    Let $\policy \in \policySpace$ be any iMDP policy. 
    The refined controller $\controller \colon \R^n \times \{0,\ldots,\finiteHorizon\} \to \controlSpace$ is piece-wise affine in $x \in \R^n$ and is defined for all $x \in \R^n$ as
    \begin{equation}
        \label{eq:refined_controller}
        \controller(\state, k) = B^\dag (\target_a - A\state), \enskip
        a \in \policy(s,k) \in \Actions(s),
    \end{equation}
    where $s \in S$ is the iMDP location such that $[x]_\sim \in s$.\footnote{If $x \in \R^n$ is on the boundary of multiple partition elements, the refined controller can select any location $s = [x]_\sim \in S$.}
\end{definition}

Finally, we obtain the following key result for the iMDP abstraction and the refined controller defined above.
\begin{theorem}
    \label{thm:pac_control_design}
    Let $\system$ be a stochastic linear system $\system$, $\pproperty$ a reach-avoid property, and $\sim$ the equivalence relation for a polyhedral partition.
    Then, for the iMDP abstraction $\imdp$, a reach-avoid property $\pproperty_\imdp$ such that $\pproperty_\imdp \preceq_\sim \pproperty$, and any policy $\pi \in \policySpace_\imdp$ with refined controller $c$ (as per \cref{def:refined_controller}), it holds that
    \begin{equation*}
        \Prob^N \Big\{
        \,
        \Prr_\system^{\controller}(x_0 \models \pproperty)
        \geq
        \min_{\transfunc \in \transfuncImdp} \Prr_{\imdp[P]}^\policy(\bar{s} \models \pproperty_\imdp)
        \,
        \Big\} \geq 1-\beta \cdot |\Actions| \cdot |S|.
    \end{equation*}
\end{theorem}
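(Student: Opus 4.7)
The plan is to combine \cref{thm:pac_relation} with \cref{thm:existence_controller}, verifying that the explicit refined controller of \cref{def:refined_controller} realizes the lower bound guaranteed by the latter. First, I apply \cref{thm:pac_relation} to conclude that the equivalence relation $\sim$ satisfies $\imdp \preceq_\sim \system$ with $\Prob^N$-probability at least $1 - \beta\cdot|\Actions|\cdot|S|$ over the $N$ noise samples used to construct the iMDP.

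Second, I condition on this good event. Since $\pproperty_\imdp \preceq_\sim \pproperty$ holds by hypothesis, \cref{thm:existence_controller} applies with $R = \sim$, yielding the existence of some controller $c'$ satisfying $\Prr_\system^{c'}(\bar{x} \models \pproperty) \geq \min_{P' \in \transfuncImdp}\Prr^\policy_{P'}(\bar{s} \models \pproperty_\imdp)$. The task is then to verify that the explicit refined controller $c$ of \cref{def:refined_controller} is itself a valid such witness. For any $(x,s) \in {\sim}$ and any enabled action $a \in \Actions(s)$, the enabling condition \eqref{eq:enabled_actions} together with \cref{assumption:nonsingular} ensure that $u = B^\dag(\target_a - Ax) \in \controlSpace$ and that $Ax + Bu = \target_a$. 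Therefore the one-step successor distribution under $c$ coincides with the law of $\target_a + \noise$, so the true transition probability into $R^\inv(s')$ equals $\Prob\{\target_a + \noise \in R^\inv(s')\}$, which is exactly the quantity that the data-driven interval $[\Plb_{s'}(s,a), \Pub_{s'}(s,a)]$ is constructed to contain in \eqref{eq:iMDP_intervals}. Thus $c$ realizes condition (2) of \cref{def:relation} point-wise and qualifies as the witness controller in \cref{thm:existence_controller}.

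Third, combining the two steps, the inequality $\Prr_\system^{c}(\bar{x}\models\pproperty) \geq \min_{P'\in\transfuncImdp}\Prr^\policy_{P'}(\bar{s}\models\pproperty_\imdp)$ holds on the good event $\{\imdp \preceq_\sim \system\}$, whose $\Prob^N$-probability is at least $1 - \beta\cdot|\Actions|\cdot|S|$ by the first step, which yields the stated guarantee.

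I expect the main obstacle to be the verification in step two: \cref{thm:existence_controller} is purely existential, so one must show that the explicit piecewise-affine refinement of \cref{def:refined_controller} supplies the witness input $u$ required by \cref{def:relation}. This hinges on the algebraic identity $Ax + B\,B^\dag(\target_a - Ax) = \target_a$, which requires \cref{assumption:nonsingular} (so that the pseudoinverse behaves as a right inverse on the relevant image and the backward reachable set \eqref{eq:backward_reachable_set} is well-defined) together with the enabling condition \eqref{eq:enabled_actions} (to keep $u$ inside $\controlSpace$). Everything else is bookkeeping: the union bound over $(s,a)$-pairs is already absorbed into \cref{thm:pac_relation}, and the property preservation is supplied by the hypothesis $\pproperty_\imdp \preceq_\sim \pproperty$.
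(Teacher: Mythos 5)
Your proposal is correct and follows exactly the route the paper intends: the theorem is the direct combination of \cref{thm:pac_relation} (the relation $\imdp \preceq_\sim \system$ holds on an event of $\Prob^N$-probability at least $1-\beta\cdot|\Actions|\cdot|S|$) with \cref{thm:existence_controller} applied on that event, the refined controller of \cref{def:refined_controller} being precisely the witness via $Ax + B B^\dag(\target_a - Ax) = \target_a$ and the footnoted guarantee that $B^\dag(\target_a - Ax) \in \controlSpace$ on the backward reachable set. Your extra care in checking that the explicit refinement realizes the existential witness is exactly the content the paper delegates to \cref{def:refined_controller} and its footnote.
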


Thus, the satisfaction probability on the iMDP is a \emph{lower bound} on the satisfaction probability for system $\system$ under the refined controller, with probability at least $1-\beta \cdot |\Actions| \cdot |S|$.
\section{Exploiting Stability in Abstraction}
\label{sec:stability}

\change{The size of the iMDP abstraction (which can be expressed by the number of edges, or transitions, in the underlying graph) from \cref{sec:abstraction} grows exponentially with the dimension of the state space.}
In this section, we develop an extension to the method from~\cite{DBLP:conf/aaai/BadingsA00PS22} to create smaller abstractions.
As our key contribution, we leverage the two-layer control design framework in \cref{fig:2layer}, which first stabilizes the dynamics and then creates an abstraction of the closed-loop dynamics.
Specifically, we use the feedback control law given by
\begin{equation}
    \control_k = -K x_k + \control'_k,
    \label{eq:two-layer-control-law}
\end{equation}
where the gain matrix $K \in \R^{m \times n}$ represents a stabilizing control law, and $u'$ is the control input captured by the abstraction.
In this paper, we obtain the feedback gain matrix by solving an instance of a linear quadratic regulator (LQR) \cite{franklin2019feedback} control problem.
Applying the feedback control law in \eqref{eq:two-layer-control-law} to system \eqref{eq:linear_system} yields the closed-loop dynamics given by
\begin{equation}
    \label{eq:dynamics_double_input}
    \state_{k+1} = 
    A_\mathrm{cl} \state_k + B \control'_k + \noise_k,
\end{equation}
where $A_\mathrm{cl} = A-BK$. 
We assume that the feedback gain $K$ satisfies the input constraints in the following way.
\begin{assumption}    
    \label{assumption:admissibility}
    The gain matrix $K \in \R^{m \times n}$ is such that $-K x \in \controlSpace$ for all $x \in \partitionSpace$ and the matrix $A_{\mathrm{cl}}$ is non-singular.
\end{assumption}

\subsection{Backward reachable set for stabilized dynamics}
We show how the iMDP abstraction described in \cref{sec:abstraction} can be employed together with the two-layer feedback control law in \eqref{eq:two-layer-control-law}. 
The key step is that we modify the backward reachable set computation in \eqref{eq:backward_reachable_set}, replacing it by
\begin{align}
    \label{eq:backward_reachable_set_stable}
    \BackReachSt(\target_a, \controlSpace') = \big\{
    \state \in \R^n : \,\,
    &\target_a = A_\mathrm{cl} \state + B\control',
    \\
    &-K\state + \control' \in \controlSpace,
    \,
    \control' \in \controlSpace'
    \big\},
    \nonumber
\end{align}
where the constraint $-K \state + \control' \in \controlSpace = \{ u \in \R^m : Gu \leq h\}$ enforces that the total input $u$ is admissible, and the constraint $\control' \in \controlSpace' = \{ u \in \R^m : G'u \leq h'\}$ controls the size of the abstraction. Matrices $G$ and $G'$ and vectors $h$ and $h'$ define the admissible control inputs; their sizes are omitted for brevity.

\begin{assumption}
    \label{assumption:origin_in_barU}
    The set $\controlSpace'$ 
    contains the origin, i.e., $0 \in \controlSpace'$.
\end{assumption}

Observe that \cref{eq:backward_reachable_set_stable} is of the same form as \cref{eq:backward_reachable_set} (despite imposing additional constraints) and can thus be computed similarly, as shown by the following lemma.
\begin{lemma}
    \label{lemma:New-backward-set}
    Under Assumptions \ref{assumption:admissibility} and \ref{assumption:origin_in_barU}, the following holds:
    \begin{itemize}
    \item[$i)$] For any $\target_a \in \R^n$, the set $\BackReachSt(\target_a, U')$ is non-empty;
    \item[$ii)$] $\BackReachSt(\target_a, \controlSpace') = \{ x \in \R^n \mid \target_a = A_\mathrm{cl}x + Bu', \, u' \in \tilde\controlSpace \}$, 
    where $\tilde\controlSpace \subset \R^m$ is a convex polytope defined as
    \begin{equation}
        \label{eq:new_control_set}
        \tilde\controlSpace = \big\{ u \in \R^m \, : \, G(\alpha + \beta u) \leq h, \,\, G' u \leq h' \big\},
    \end{equation}
    with $\alpha = -K A_\mathrm{cl}^\inv \target_a$ and $\beta = I+K A_\mathrm{cl}^\inv B$, \change{where $I$ the identity matrix of appropriate size.}
    \end{itemize}    
\end{lemma}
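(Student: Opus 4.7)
The plan is to prove $ii)$ first by direct algebraic manipulation, then deduce $i)$ from $ii)$ by exhibiting a single feasible point. The key observation is that the non-singularity of $A_\mathrm{cl}$ (\cref{assumption:admissibility}) lets us eliminate the state $\state$ from the three defining constraints of $\BackReachSt(\target_a, \controlSpace')$, turning them into a pure constraint set on $\control'$ alone.

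For part $ii)$, I would solve the equation $\target_a = A_\mathrm{cl}\state + B\control'$ uniquely for $\state$ as $\state = A_\mathrm{cl}^\inv(\target_a - B\control')$, and substitute into the input admissibility expression to obtain
\begin{align*}
-K\state + \control' &= -KA_\mathrm{cl}^\inv(\target_a - B\control') + \control' \\
&= -KA_\mathrm{cl}^\inv \target_a + (I + KA_\mathrm{cl}^\inv B)\control' \\
&= \alpha + \beta \control'.
\end{align*}
Hence $-K\state + \control' \in \controlSpace$ rewrites as $G(\alpha + \beta\control') \leq h$, and $\control' \in \controlSpace'$ rewrites as $G'\control' \leq h'$, which together give exactly $\control' \in \tilde{\controlSpace}$. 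Parameterizing $\state$ through the admissible $\control' \in \tilde{\controlSpace}$ yields the claimed set equality in both directions.

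For part $i)$, by $ii)$ it suffices to show $\tilde{\controlSpace} \neq \emptyset$. I would take $\control' = 0$: \cref{assumption:origin_in_barU} directly gives $G'\cdot 0 \leq h'$, while the remaining constraint $G\alpha \leq h$ reduces to $-KA_\mathrm{cl}^\inv \target_a \in \controlSpace$, which follows from \cref{assumption:admissibility} since target points $\target_a$ are selected so that $A_\mathrm{cl}^\inv\target_a \in \partitionSpace$. The main subtlety is precisely this last step: the membership $-KA_\mathrm{cl}^\inv \target_a \in \controlSpace$ is not automatic for arbitrary $\target_a \in \R^n$ and implicitly restricts admissible target points to those whose corresponding equilibrium state under the stabilizer lies in the partitioned region. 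Everything else is routine linear algebra once $A_\mathrm{cl}$ is known to be invertible.
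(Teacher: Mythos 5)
Your proof is correct and follows essentially the same route as the paper: part $ii)$ is obtained by eliminating $\state$ via $\state = A_\mathrm{cl}^\inv(\target_a - B\control')$ and rewriting the input constraint as $\alpha + \beta\control' \in \controlSpace$, and part $i)$ by checking the point corresponding to $\control' = 0$, i.e., $\tilde{\state} = A_\mathrm{cl}^\inv\target_a$. The subtlety you flag is real: \cref{assumption:admissibility} only guarantees $-K\state \in \controlSpace$ for $\state \in \partitionSpace$, so the claim ``for any $\target_a \in \R^n$'' implicitly requires $A_\mathrm{cl}^\inv\target_a \in \partitionSpace$ (or a strengthened assumption), a restriction the paper's own proof silently elides while invoking the same argument.
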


\begin{proof}
    Item $i)$: We will show that the point $\tilde{x} = A_{\mathrm{cl}}^{-1}d_a \in \BackReachSt(\target_a, \controlSpace')$. 
    Note that this point $\tilde{x}$ is obtained for $\control' = 0$ in \eqref{eq:backward_reachable_set_stable}, which is an admissible input due to \cref{assumption:origin_in_barU}.
    Moreover, due to \cref{assumption:admissibility}, we have that $-K x \in \controlSpace$, and thus, it holds that $\tilde x \in \BackReachSt(\target_a, \controlSpace')$, which concludes the proof of item $i)$.    
    Item $ii)$: Solving the equality constraint in \eqref{eq:backward_reachable_set_stable} for $x$ yields $x = A_\mathrm{cl}^\inv (\target_a - Bu)$, so the input constraint $-Kx + u' \in \controlSpace$ can be written as
    \begin{equation}
        \label{eq:lemma1:proof1}
        -K A_\mathrm{cl}^\inv \target_a + (I + K A_\mathrm{cl}^\inv B) u' = \alpha + \beta u' \in \controlSpace.
    \end{equation}
    Thus, we have two convex polyhedral constraints on $u$, given by $\alpha + \beta u' \in \controlSpace = \{ u \in \R^m : Gu \leq h\}$ and $u' \in \controlSpace' = \{ u \in \R^m : G'u \leq h'\}$.
    The intersection of the feasible sets for $u$ is the set $\tilde\controlSpace$ in \eqref{eq:new_control_set}, concluding the proof of item $ii)$.
\end{proof}

Observe that, while we can compute $\BackReachSt(\target_a, \controlSpace')$ similarly as in \cref{sec:abstraction}, the number of vertices to consider is generally higher due to the additional input constraint $u' \in U'$. 

\subsection{Constructing smaller abstractions}
We can use the modified backward reachable set in \eqref{eq:backward_reachable_set_stable} to construct abstractions with fewer enabled actions, as illustrated by the following lemma. 

\begin{lemma}
    Consider the backward sets defined by \eqref{eq:backward_reachable_set} and \eqref{eq:backward_reachable_set_stable}. 
    If $\controlSpace' = \R^p$ in \eqref{eq:backward_reachable_set_stable}, then we have that $\BackReachSt(\target_a, \controlSpace') = \BackReach(\target_a)$.
    Moreover, for any two subsets $\controlSpace'' \subset \controlSpace' \subseteq \R^p$, it holds that $\BackReachSt(\target_a, \controlSpace'') \subseteq \BackReachSt(\target_a, \controlSpace')$.
    \label{lemma:intermediate-result}
\end{lemma}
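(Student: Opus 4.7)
The statement has two parts, both of which follow from direct manipulations of the definitions in \eqref{eq:backward_reachable_set} and \eqref{eq:backward_reachable_set_stable}. My strategy is to reduce the stabilized backward set to the open-loop one by a change of variables, and then to observe monotonicity of the set in $\controlSpace'$ as a direct consequence of its definition.

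For the first claim, I would substitute $A_{\mathrm{cl}} = A - BK$ into the defining equation of \eqref{eq:backward_reachable_set_stable} to obtain $d_a = Ax + B(u' - Kx)$. Setting $u = u' - Kx$ is an invertible affine change of variable in the auxiliary input, and the input constraint $-Kx + u' \in \controlSpace$ becomes exactly $u \in \controlSpace$. With $\controlSpace' = \R^p$, the requirement $u' \in \controlSpace'$ is vacuous, so $u$ ranges freely over $\R^p$ subject only to $u \in \controlSpace$. Substituting back, the set in \eqref{eq:backward_reachable_set_stable} reduces to $\{x \in \R^n : d_a = Ax + Bu, \, u \in \controlSpace\}$, which is exactly $\BackReach(d_a)$ as in \eqref{eq:backward_reachable_set}. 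This proves $\BackReachSt(d_a, \R^p) = \BackReach(d_a)$.

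For the second claim, the inclusion follows directly from the definition. Let $\controlSpace'' \subset \controlSpace' \subseteq \R^p$ and take any $x \in \BackReachSt(d_a, \controlSpace'')$. By \eqref{eq:backward_reachable_set_stable}, there exists some $u' \in \controlSpace''$ with $d_a = A_{\mathrm{cl}} x + B u'$ and $-Kx + u' \in \controlSpace$. Since $\controlSpace'' \subset \controlSpace'$, the same $u'$ satisfies $u' \in \controlSpace'$, and therefore $x \in \BackReachSt(d_a, \controlSpace')$. This establishes $\BackReachSt(d_a, \controlSpace'') \subseteq \BackReachSt(d_a, \controlSpace')$.

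Neither step presents a real obstacle: the first is a linear change of variable already implicit in the passage from the open-loop to the closed-loop formulation, and the second is pure monotonicity of a feasibility set under constraint tightening. The only thing to be careful about is to record explicitly that the change of variable $u \mapsto u' = u + Kx$ is a bijection on $\R^p$ for each fixed $x$, so that $u$ ranging over $\controlSpace$ is equivalent to $u'$ ranging over $\controlSpace + Kx$, which, intersected with the unconstrained $\controlSpace' = \R^p$, yields precisely the open-loop admissible set.
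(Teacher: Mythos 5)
Your proof is correct and follows essentially the same route as the paper: the first claim via the substitution $u = -Kx + u'$ reducing the stabilized set to the open-loop one when $\controlSpace' = \R^p$, and the second by direct monotonicity of the feasible set in $\controlSpace'$. Your explicit remark that the change of variable is a bijection for each fixed $x$ is a slightly more careful justification than the paper gives, but the argument is the same.
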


\begin{proof}
    Letting $\controlSpace' = \R^p$ and $A_\mathrm{cl} = A-BK$ in \eqref{eq:backward_reachable_set_stable} gives
    \begin{alignat*}{2}
        \BackReachSt(\target_a, \controlSpace') &= 
        \big\{
            \state \in \R^n : \,\,
            &&\target_a = A\state + B(-K\state + \control'), \,\,
            \\
            & &&-K\state + \control' \in \controlSpace
        \big\}
        \\
        &= \big\{
            \state \in \R^n : \,\,
            &&\target_a = A\state + Bu, \,\, u \in \controlSpace
        \big\}
        \\
        &= \BackReach(\target_a), &&
    \end{alignat*}
    thus proving the first claim.
    For $\controlSpace'' \subset \controlSpace'$, observe from \eqref{eq:backward_reachable_set_stable} that $u' \in U'' \subset U'$.
    Thus, we obtain that $\BackReachSt(\target_a, \controlSpace'') \subseteq \BackReachSt(\target_a, \controlSpace')$, which proves the second claim.
\end{proof}

Recall from \cref{sec:abstraction} that an action $a \in \Actions$ is enabled in location $s \in S$ if and only if the corresponding partition element is contained in the backward reachable set $\BackReach(\target_a)$.
In the modified backward reachability set in \eqref{eq:backward_reachable_set_stable}, we can control the size of $\BackReachSt(\target_a)$ through $\controlSpace'$. 
In fact, \cref{lemma:intermediate-result} shows that by shrinking the set $\controlSpace'$, we may reduce the number of enabled actions at each state and, consequently, the size of the graph of the iMDP. 
In the next section, we show how suitable choices for the feedback gain $K$ and input constraint $\controlSpace'$ may lead to significantly smaller iMDP abstractions. 
\section{Numerical Experiments}
\label{sec:experiments}

We implement our method in a Python tool, which is available at \url{https://github.com/LAVA-LAB/DynAbs}.
We use the model checker \prism~\cite{DBLP:conf/cav/KwiatkowskaNP11} to compute optimal policies as per \eqref{eq:optimal_policy} for iMDPs.
In all experiments, we apply \cref{thm:pac_control_design} with an overall confidence of $1-\beta \cdot |\Actions| \cdot |S| = 0.99$.
\change{For simplicity, we use partitions into rectangular regions.}

\subsection{Double integrator}
Consider a stochastic system with dynamics given as
{\small
\begin{equation*}
    \state_{k+1} = \begin{bmatrix}
        \frac{1}{\rho^2} & \frac{1 + \rho} + 1 \\
        0 & 1
    \end{bmatrix} \state_k
    + \begin{bmatrix}
        \frac{0.5 + \rho}{\rho} & 0.5 \\
        1 & 1
    \end{bmatrix} (-K\state_k + \control_k') + \noise_k,
    \label{eq:ExperimentDynamics:FullyActuated}
\end{equation*}}%
where we apply the control law given in \eqref{eq:two-layer-control-law}, and the noise $\eta_k \sim  \Gauss{0}{I_2}$ has a standard normal distribution and satisfies the conditions in Assumption \ref{assumption:unknown_noise}.
We select $\rho = 2$ to render the system unstable when a trivial control of $-Kx_k + u'_k = 0$ is applied. 
The reach-avoid task is to reach a state $x \in [-3, 3]^2$ while avoiding states $x \notin [-41,41]^2$ within $\finiteHorizon = 16$ steps.\footnote{\change{The correctness of our iMDP abstraction is independent of the horizon. For numerical experiments with an infinite time horizon, we refer to~\cite{DBLP:journals/jair/BadingsRAPPSJ23}.}}
We partition the set $\partitionSpace = [-41,41]^2$ into $41$ by $41$ rectangular regions of width two.
The input constraint is $\controlSpace = [-60,60]^2$.

\pgfmathsetlengthmacro\MajorTickLength{
      \pgfkeysvalueof{/pgfplots/major tick length} * 0.5
    }

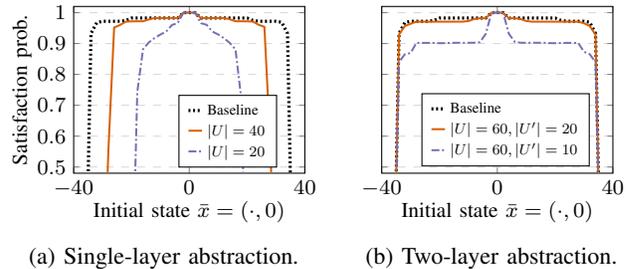
\begin{figure}[t]
\begin{subfigure}[b]{0.49\linewidth}
    \centering
    \footnotesize\begin{tikzpicture}[]
  \begin{axis}[
      width=1.1\linewidth,
      height=3.8cm,
      ymajorgrids,
      grid style={dashed,gray!30},
      xlabel={Initial state $\bar{x}=(\cdot, 0)$},
      ylabel={Satisfaction prob.},
      x label style={at={(axis description cs:0.5,-0.12)}},
      y label style={at={(axis description cs:-0.15,0.5)}},
      xmin=-40,
      xmax=40,
      ymin=0.48,
      ymax=1.02,
      ytick={0.5, 0.6, 0.7, 0.8, 0.9, 1.0},
      xtick={-40, 0, 40},
      major tick length=\MajorTickLength,
      cycle list/Dark2,
      every axis plot/.append style={line width=0.75pt},
      legend cell align={left},
      legend columns=1,
      legend image post style={xscale=0.3},
      legend style={at={(0.88,0.47)},
                    nodes={scale=0.7, transform shape},
                    anchor=north east, 
                    column sep=0ex,}
    ]
    
    \addplot [black, densely dotted, very thick] table[x=x, y=y] {figures/data3/cross_section_stab=False_uMax=60.0.csv};
    \addplot [index of colormap=1 of Dark2] table[x=x, y=y] {figures/data3/cross_section_stab=False_uMax=40.0.csv};
    \addplot [index of colormap=2 of Dark2, densely dashdotted] table[x=x, y=y] {figures/data3/cross_section_stab=False_uMax=20.0.csv};

    \legend{
        {Baseline}, 
        {$|\controlSpace| = 40$},
        {$|\controlSpace| = 20$},
    }
    
  \end{axis}
\end{tikzpicture}%
    \caption{Single-layer abstraction.}
    \label{fig:integrator_baseline}
\end{subfigure}
\begin{subfigure}[b]{0.49\linewidth}
    \centering
    \footnotesize\begin{tikzpicture}[]
  \begin{axis}[
      width=1.1\linewidth,
      height=3.8cm,
      ymajorgrids,
      grid style={dashed,gray!30},
      xlabel={Initial state $\bar{x}=(\cdot, 0)$},
      ylabel=\empty,
      x label style={at={(axis description cs:0.5,-0.12)}},
      xmin=-40,
      xmax=40,
      ymin=0.48,
      ymax=1.02,
      ytick={0.5, 0.6, 0.7, 0.8, 0.9, 1.0},
      yticklabels=\empty,
      xtick={-40, 0, 40},
      major tick length=\MajorTickLength,
      cycle list/Dark2,
      every axis plot/.append style={line width=0.75pt},
      legend cell align={left},
      legend columns=1,
      legend image post style={xscale=0.3},
      legend style={at={(0.90,0.48)},
                    nodes={scale=0.7, transform shape},
                    anchor=north east, 
                    column sep=0ex,}
    ]

    \addplot [black, densely dotted, very thick] table[x=x, y=y] {figures/data3/cross_section_stab=False_uMax=60.0.csv};
    \addplot [index of colormap=1 of Dark2] table[x=x, y=y] {figures/data3/cross_section_stab=True_uMax=60.0_uBarMax=20.0.csv};
    \addplot [index of colormap=2 of Dark2, densely dashdotted] table[x=x, y=y] {figures/data3/cross_section_stab=True_uMax=60.0_uBarMax=10.0.csv};


    \legend{
        {Baseline},
        {$|\controlSpace| = 60, |\controlSpace'| = 20$},
        {$|\controlSpace| = 60, |\controlSpace'| = 10$}
    }
    
  \end{axis}
\end{tikzpicture}%
    \caption{Two-layer abstraction.}
    \label{fig:integrator_stabilized}
\end{subfigure}
    \caption{Lower bound satisfaction probabilities (obtained from \cref{thm:pac_control_design}) for the integrator experiment with initial conditions $\bar\state = (x_1, 0)$ for all $-41 \leq x_1 \leq 41$.}
    \label{fig:integrator}
\end{figure}

\newcommand\setrow[1]{\gdef\rowmac{#1}#1\ignorespaces}
\newcommand\clearrow{\global\let\rowmac\relax}
\clearrow

\begin{table}[t!]
\centering

\caption{Number of iMDP transitions for the integrator experiment. The highlighted rows are those shown in \cref{fig:integrator}.}
\begin{tabular}{lllHr}
    \toprule
    Stabilized? & $\controlSpace$ & $\controlSpace'$ & Locations & iMDP transitions
    \\
    \midrule	
    \belowrulesepcolor{LightCyan}\rowcolor{LightCyan}
    No (baseline) & $[-60,60]^2$ & n.a. & $1\,684$ & $39\,773\,745$ \\
    \rowcolor{LightCyan}
    No & $[-40,40]^2$ & n.a. & $1\,684$ & $21\,289\,058$ \\
    \rowcolor{LightCyan}
    No & $[-20,20]^2$ & n.a. & $1\,684$ & $5\,219\,518$ \\
    \aboverulesepcolor{LightCyan} 
    \midrule
    Yes & $[-60,60]^2$ & $[-30,30]^2$ & $1\,684$ & $25\,671\,576$ \\
    \rowcolor{LightCyan}
    Yes & $[-60,60]^2$ & $[-20,20]^2$ & $1\,684$ & $15\,757\,546$ \\
    \rowcolor{LightCyan}
    Yes & $[-60,60]^2$ & $[-10,10]^2$ & $1\,684$ & $3\,996\,029$ \\
    Yes & $[-40,40]^2$ & $[-20,20]^2$ & $1\,684$ & $11\,691\,267$ \\
    Yes & $[-40,40]^2$ & $[-10,10]^2$ & $1\,684$ & $2\,895\,878$ \\
    Yes & $[-20,20]^2$ & $[-10,10]^2$ & $1\,684$ & $1\,034\,996$ \\
\bottomrule
\end{tabular}
\label{table:integrator}
\end{table}

\subsubsection*{Baseline}
As a baseline, we set $K=0$ and construct the single-layer iMDP abstraction (as outlined in \cref{sec:abstraction}) with input constraint $\controlSpace = [-60,60]^2$.
The resulting iMDP has 39.8 million transitions, and the lower bounds on the satisfaction probabilities (obtained from \cref{thm:pac_control_design}) are shown in \cref{fig:integrator} for a range of initial states $\bar{x} = (x_1, 0)$ for $x_1 \in [-41, 41]$.

\subsubsection*{Stabilizing controller}
We now use our two-layer abstraction scheme, where we compute the gain $K$ with an LQR with cost matrices $Q = R = I_2$, yielding $(A-BK)$ having eigenvalues of $\lambda = 0.178 \pm 0.136i$.
We construct the iMDP for the different sets $\controlSpace$ and $\controlSpace'$ shown in \cref{table:integrator}, presenting the lower bound satisfaction probabilities for two cases in \cref{fig:integrator}.
With our method, we construct significantly smaller abstractions with little loss in probabilistic guarantee.
For example, with $|\controlSpace| = 60$, $|\controlSpace'| = 20$, the number of iMDP transitions is reduced from 39.8 to 15.8 million at negligible loss in probabilistic guarantee.
Shrinking the set $\controlSpace'$ further reduces the iMDP size; however, at the cost of a considerable reduction in probabilistic guarantee, as shown in \cref{fig:integrator_baseline}.

\begin{figure}[t]
\begin{subfigure}[b]{0.495\linewidth}
    \centering
    \includegraphics[width=\linewidth]{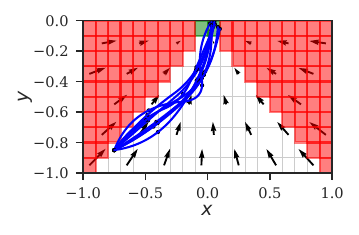}
    \caption{Aligned property.}
    \label{fig:spacecraft_aligned}
\end{subfigure}
\begin{subfigure}[b]{0.495\linewidth}
    \centering
    \includegraphics[width=\linewidth]{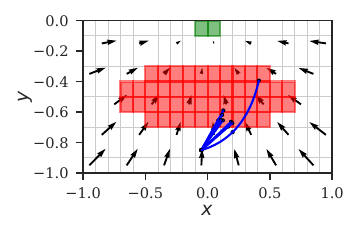}
    \caption{Disaligned property.}
    \label{fig:spacecraft_disaligned}
\end{subfigure}
    \caption{Simulated trajectories and stabilized vector fields $(A-BK)x$ for both reach-avoid properties considered in the spacecraft problem (goal states in green; unsafe states in red). Case (a) is aligned, while case (b) is not.}
    \label{fig:spacecraft}
\end{figure}

\subsection{Spacecraft docking}
We consider the spacecraft docking problem from~\cite{DBLP:conf/hybrid/VinodGO19}, with $x \in \R^4$ modeling the position and velocity in two dimensions 
(see~\cite{DBLP:conf/hybrid/VinodGO19} for the full model).
We illustrate that our method generally works well if the stabilizing feedback controller is aligned with the reach-avoid property.
That is, the stabilizing controller should \emph{steer} the state $x$ towards the goal region $\goalRegion \subset \R^4$, while steering clear from the unsafe states $\unsafeRegion \subset \R^4$.
We consider the two reach-avoid problems shown in \cref{fig:spacecraft} (only the position state variables are shown).
As a baseline, we construct the abstraction with a partition into $3\,200$ elements and an input constraint $\controlSpace = [-0.1, 0.1]^2$.
For the reach-avoid problem in \cref{fig:spacecraft_aligned}, the resulting iMDP has 1.6 million transitions and leads to a lower bound satisfaction probability of $0.80$ in \cref{thm:pac_control_design}.
Similarly, for the problem in \cref{fig:spacecraft_disaligned}, the iMDP has 2.1 million transitions and leads to a lower bound satisfaction probability of $0.86$.

Now, we apply our method with $\controlSpace' = [-0.08, 0.08]^2$, resulting in iMDPs with 280 and 330 thousand transitions (reductions of 79\% and 85\% respectively).
For the problem in \cref{fig:spacecraft_aligned}, the lower bound on the satisfaction probability is $0.79$ (only $0.01$ below the baseline).
However, for \cref{fig:spacecraft_disaligned}, the bound drops to $0.0072$, i.e., almost zero.
To explain this severe performance loss, we show simulated trajectories under the refined controller (as per \cref{def:refined_controller}) in \cref{fig:spacecraft}.
Moreover, the arrows show the vector field under the stabilized dynamics, i.e., $(A-BK)x$ for different $x \in \R^n$.
In \cref{fig:spacecraft_aligned}, the vector field points to the goal region and away from unsafe states, and is thus aligned with the property.
By contrast, the vector field in \cref{fig:spacecraft_disaligned} is not aligned since it steers the system into unsafe states, causing a performance loss of the controller.
\section{Conclusion}
\label{sec:conclusion}

In this paper, we have developed a novel formal abstraction method for stochastic linear dynamical systems that exploits stability to generate smaller abstract models.
By stabilizing the dynamics with a linear feedback gain first, we have shown that we can reduce the size of abstractions (in terms of the number of edges in the underlying graph) significantly.
Our experiments have shown that, when the feedback gain steers the system toward the goal states (and away from the unsafe states), we can reduce the number of transitions by up to $90\%$ with negligible performance loss.

However, if the stabilizing controller is not aligned with the control task (as in \cref{fig:spacecraft_disaligned}), the controller performance degrades significantly.
One possible solution is to use a piece-wise affine trajectory-tracking controller, which selects different gains in different regions of the state space. 
Exploring this latter approach will be the next step of our research.

\bibliographystyle{ieeetr}
\bibliography{literature,ECC-licio}

\end{document}